\documentclass[12pt]{amsart}


\usepackage[utf8]{inputenc}
\usepackage[english]{babel}
\usepackage{amsmath}
\usepackage{amsfonts}
\usepackage{amssymb}
\usepackage{graphicx}
\usepackage{mathtools}
\usepackage{amsthm}

\newtheorem{thm}{Theorem}[section]
\newtheorem{prop}[thm]{Proposition}
\newtheorem{lem}[thm]{Lemma}

\newtheorem{rem}[thm]{Remark}

\theoremstyle{definition}
\newtheorem{defi}{Definition}[section]
\newcommand{\R}{\mathbb{R}}
\renewcommand{\P}{\mathbb{P}}

\newcommand{\Z}{\mathbb{Z}}
\newcommand{\C}{\mathbb{C}}
\newcommand{\Y}{\mathbb{Y}}

\newcommand{\E}{\mathbb{E}}

\newcommand{\T}{\mathbb{T}}

\usepackage{fullpage}

\usepackage{hyperref}

\title{The 2D Toda lattice hierarchy for multiplicative statistics of Schur measures}
\author[P. Lazag]{Pierre Lazag }

\date{}
\begin{document}

\begin{abstract}
    We prove that Fredholm determinants built from generalizations of Schur measures, or equivalently, arbitrary multiplicative statistics of the original Schur measures, are tau-functions of the 2D Toda lattice hierarchy. Our result applies to finite temperature Schur measures and extends both the results of Okounkov in \cite{okounkovschurmeasures} and of Cafasso--Ruzza in \cite{cafassoruzza} concerning the finite temperature Plancherel measure. Our proof relies on the semi-infinite wedge formalism and the Boson-Fermion correspondence.
\end{abstract}
\maketitle

\address{SISSA, via Bonomea 265, 34136, Trieste, Italy

LAREMA, UMR CNRS 6093, 2 Boulevard Lavoisier
49045 Angers cedex 01, France

Institut de Recherche en Math\'ematique et Physique, UCLouvain, 

Chemin du Cyclotron 2, 1348 Louvain-la-Neuve, Belgium}

\email pierrelazag@hotmail.fr 

\normalsize

\section{Introduction}
The goal of this note is to build tau-functions for the 2D Toda lattice hierarchy out of natural generalizations of Schur measures or, equivalently, of multiplicative functionals of Schur measures. Schur measures were introduced by Okounkov in \cite{okounkovschurmeasures} and are probability measures on the set of Young diagrams parametrized by two countable sets of parameters $t=(t_1,t_2,\dots)$, $t'=(t_1',t_2',\dots) \subset \C$ giving rise to determinantal point processes on $\Z + 1/2$ via the map $\lambda \to \{\lambda_i -i +1/2 \}$ with a correlation kernel $K_{t,t'}$. It is proved in \cite{okounkovschurmeasures}, Theorem 3, that the Fredholm determinants $\det(1- K_{t,t'})_{\ell^2\{n+1/2,n+3/2,\dots\}}$ are tau--functions of the 2D Toda lattice hierarchy, where the parameters $t$ and $t'$ play the role of the time parameters.

In the case when $t=t'=(L,0,0,\dots)$, $L>0$, the Schur measure is the poissonized Plancherel measure with parameter $L>0$, giving rise to the determinantal point process associated with the discrete Bessel kernel, see e.g. \cite{boo}, \cite{borodinbessel}. In \cite{cafassoruzza}, Theorem I, the authors derive integrable equations for Fredholm determinants of deformations of the Plancherel measure similar to the ones obtained in \cite{okounkovschurmeasures}, see equation (A.21) there and equation (1.11) in \cite{cafassoruzza}. Such deformations include the case of the finite temperature Bessel kernel as an example, see \cite{beteabouttier} and \cite{borodinperiodic}. Their proof relies on the fact that such Fredholm determinants may be seen as the expectation of a multiplicative functional under the original Plancherel measure, and then exploits the integrable structure of the discrete Bessel kernel and the related Riemann-Hilbert problem techniques.\\

In the present note, we consider the same deformation to the general Schur measures and prove that Fredholm determinants built out of these deformations are also tau-functions of the 2D Toda lattice hierarchy. Our result is both a generalization of those by \cite{cafassoruzza} and \cite{okounkovschurmeasures}. As in \cite{cafassoruzza}, we interpret these Fredholm determinants as expectations of multiplicative functionals under the original Schur measure, but our argument then differs from \cite{cafassoruzza} since the kernel under consideration may not be integrable. So, instead of using Riemann-Hilbert problem techniques, we use the semi-infinite wedge formalism introduced in \cite{okounkovschurmeasures} in the context of random partitions and describe our tau-functions in terms of matrix coefficients of operators on the fermionic Fock space. This approach to tau-functions for integrable hierarchies goes back to the so-called Kyoto school, see for example the book \cite{petitlivrebleu} or the survey \cite{alexandrovzabrodin} and references therein.\\

Examples include what we call finite temperature deformations of Schur measures, following the construction from \cite{beteabouttier} and \cite{borodinperiodic}. More concrete examples are the multi-critical Schur measures and their finite temperature versions, see \cite{beteabouttierwalsh} and also \cite{kimurazahabi}.

In another direction, it was also known from \cite{borodinpainleve}, \cite{baikrhp}, \cite{adlervanmoerbeke}, that specific choices for the parameters of the Schur measures, encompassing the Plancherel measure case, lead to discrete Painlev\'e equations for gap probabilities. Recently, in \cite{chouteautarricone}, the authors have extended these results to Schur measures with an arbitrary but finite number of parameters, obtaining a discrete Painlev\'e II hierarchy and recursion equations for the gap probabilities. Our result also applies to this particular choice of parameters they consider and for any multiplicative statistics. We wonder if one could find a deformed Painlev\'e hierarchy, extending the equations in Thm. III of \cite{cafassoruzza} by our methods. We also wonder if one could specialize our result to the case  of the homogeneous stochastic six-vertex model described in \cite{borodinsixvertex}, where it is shown that the Laplace transform of the height-function can be expressed as non-trivial multiplicative statistics of a Schur measure.\\

We start below by stating our main result precisely, Theorem \ref{thm:thm1} below, before describing the scheme of our argument in more detail.
\subsection{Statement of the result}
Let $t=(t_1,t_2,\dots)$, $t'=(t_1',t_2',\dots) \subset \C$ be two infinite sequences of numbers such that the series
\begin{align*}
    \sum_{n \geq 1}t_nz^n, \quad\sum_{n \geq 1}t_n'z^n
\end{align*}
define analytic functions in a neighborhood of the unit circle $\T:= \{z \in \C, \, |z|=1\}$, which holds if, for example, there exist $C>0$ and $r\in (0,1)$ such that $t_n,t_n' \leq Cr^n$ for all $n \geq 1$. We assume this condition holds throughout this text. We have in particular that
\begin{align*}
\sum_{n \geq 1} n |t_n t_n'| <+\infty,
\end{align*}
and we set
\begin{align} \label{eq:defZ}
Z_{t,t'}:= \exp \left( \sum_{n \geq 1} nt_n t_n' \right).
\end{align}
Observe that we have
\begin{align}
Z_{t,t'}=Z_{-t,-t'}=Z_{t',t}.
\end{align}
Define the functions
\begin{align*}
\gamma(z,t) &:= \exp \left( \sum_{ n \geq 1} t_n z^n \right),  \\
J(z;t,t')&:= \frac{\gamma(z,t)}{\gamma(z^{-1},t')} = \exp \left( \sum_{n \geq 1} t_nz^n -t_n' z^{-n} \right).
\end{align*}
We have the symmetries
\begin{align} \label{eq:J1}
J(z;t',t) = J(z^{-1},-t,-t')
\end{align}
and
\begin{align} \label{eq:J2}
J(z;-t,-t') = J(z;t,t')^{-1}
\end{align}
We define the sequence of complex numbers $ \left(J_k(t,t') \right)_{k \in \Z} $ via the generating Laurent series:
\begin{align*}
J(z;t,t')= \sum_{k \in \Z} J_k(t,t') z^k.
\end{align*}
From equations (\ref{eq:J1}) and (\ref{eq:J2}), we have
\begin{align} \label{eq:J3}
J_{-k}(-t,-t')= J_k(t',t)
\end{align}
and
\begin{align} \label{eq:J4}
\sum_{k \in \Z}J_k(-t,-t') z^k=  J(z;t;t')^{-1}. 
\end{align}
Let $\Z':= \Z +1/2$ be the set of half-integers and let $\sigma : \Z' \to [0,1]$ be a function such that
\begin{align} \label{cond:l1}
\sum_{ k \in \Z'_{<0}} \sigma(k) <+ \infty.
\end{align}
We form the kernel
\begin{align} \label{eq:defK}
K_{t,t',\sigma}(x,y) = \sum_{k \in \Z'} \sigma(k) J_{x +k}(t,t') J_{-y-k}(-t,-t'), \quad x,y \in \Z'.
\end{align}
From equations (\ref{eq:J3}) and (\ref{eq:J4}), the kernel $K_{t,t',\sigma}(x,y)$ is changed into $K_{t,t',\sigma}(y,x)$ under the transformation $ t \leftrightarrow t'$, i.e.
\begin{align} \label{eq:symK}
    K_{t',t,\sigma}(x,y) = K_{t,t',\sigma}(y,x).
\end{align}
The $\ell^1$ condition (\ref{cond:l1}) for the function $\sigma$ implies the following
\begin{prop} \label{prop:prop1}
The restriction of $K_{t,t',\sigma}$ on $\ell^2 \{n +1/2,n+3/2,\dots \}$ is trace class for any $n \in \Z$.
\end{prop}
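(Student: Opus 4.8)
The plan is to show the restriction of $K_{t,t',\sigma}$ to $l^2\{n+1/2, n+3/2, \dots\}$ is trace class by estimating the decay of the coefficients $J_k(t,t')$ and the matrix entries $K_{t,t',\sigma}(x,y)$. Since trace class follows once we establish absolute summability of the diagonal entries together with suitable control off the diagonal, I would aim for the stronger and cleaner statement that $\sum_{x,y}|K_{t,t',\sigma}(x,y)| < +\infty$ on the relevant index set, which immediately implies the Hilbert-Schmidt property, and then upgrade to trace class.

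First I would exploit the analyticity assumption on $J(z;t,t')$. Since $J$ is analytic in a neighbourhood of the unit circle $\T$, it is in particular analytic on an annulus $r < |z| < R$ with $r < 1 < R$. By standard Cauchy estimates on the Laurent coefficients, there exist constants $C>0$ and $0 < \rho < 1$ such that
\begin{align*}
|J_k(t,t')| \leq C \rho^{|k|} \quad \text{for all } k \in \Z.
\end{align*}
The same bound holds for the coefficients $J_k(-t,-t')$ appearing in (\ref{eq:defK}), using (\ref{eq:J4}) and the fact that $J(z;t,t')^{-1}$ is also analytic on a (possibly smaller) annulus around $\T$, since $J$ is non-vanishing there as an exponential of an analytic function. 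This geometric decay of the Laurent coefficients is the key analytic input.

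Next I would plug these estimates into the definition (\ref{eq:defK}) and split the sum over $k \in \Z'$ according to the sign of $k$. For $x, y \geq n + 1/2$, the exponents $|x+k|$ and $|-y-k| = |y+k|$ control the product $|J_{x+k}(t,t')||J_{-y-k}(-t,-t')| \leq C^2 \rho^{|x+k|+|y+k|}$. Using the triangle inequality $|x+k| + |y+k| \geq |x-y|$ and, for the negative-$k$ tail, the summability condition (\ref{cond:l1}) on $\sigma$ together with $\sigma \leq 1$ on the positive tail, I would bound $|K_{t,t',\sigma}(x,y)|$ by a quantity decaying geometrically in $x$, $y$, and $|x-y|$. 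Summing over $x, y \geq n+1/2$ then gives a convergent geometric-type series.

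The main obstacle I anticipate is the bookkeeping in the region where $k$ is negative: there $\sigma(k)$ is not bounded below, so I must genuinely use the finite-sum condition (\ref{cond:l1}) rather than the trivial bound $\sigma \leq 1$, and I must make sure the interplay between the decay in $k$ coming from the Cauchy estimate and the summability of $\sigma$ over the negative half-line is handled so that no boundary term is lost. Once absolute summability of the matrix entries is established, the restricted operator is Hilbert-Schmidt; to conclude trace class I would either factor $K_{t,t',\sigma}$ through the intermediate space $l^2(\Z')$ as a composition of two Hilbert-Schmidt operators built from the rapidly decaying arrays $(\sqrt{\sigma(k)}\,J_{x+k})$ and $(\sqrt{\sigma(k)}\,J_{-y-k})$, or directly verify that the sum of the singular values converges by the same geometric estimates. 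The factorization route is cleanest, since it exhibits $K_{t,t',\sigma}$ as a product of two Hilbert-Schmidt operators and hence trace class, with the trace norm controlled by the product of their Hilbert-Schmidt norms.
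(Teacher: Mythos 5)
The paper actually states this proposition without proof, so there is nothing to compare line by line; the closest it comes is the factorization $K_{t,t',\sigma} = H_{t,t'}\,\sigma\,\check{H}_{t,t'}$ used later in the proof of Proposition \ref{lem:lem1}, and your proposed argument is precisely the natural way to make the trace-class claim rigorous from that factorization. Your essential steps are all correct: Cauchy estimates on the annulus of analyticity give $|J_k(t,t')|, |J_k(-t,-t')| \leq C\rho^{|k|}$ (and your justification that $J^{-1}$ is analytic because $J$ is a non-vanishing exponential is right); writing the kernel as the composition of the two arrays $\left(\sqrt{\sigma(k)}J_{x+k}(t,t')\right)$ and $\left(\sqrt{\sigma(k)}J_{-y-k}(-t,-t')\right)$ through the intermediate space $l^2(\Z')$ exhibits it as a product of two Hilbert--Schmidt operators, hence trace class; and the Hilbert--Schmidt norms are finite by exactly the split you describe, using $\sigma \leq 1$ together with the geometric decay of $\rho^{2|x+k|}$ summed over $x \geq n+1/2$ for the positive-$k$ tail, and condition (\ref{cond:l1}) for the negative-$k$ tail. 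Two small cautions. First, your opening framing overstates what entrywise summability buys: $\sum_{x,y}|K(x,y)|<\infty$ gives Hilbert--Schmidt but does not by itself imply trace class, so the factorization is not an optional "cleanest route" but the actual load-bearing step; you do recognize this at the end. Second, the intermediate claim that $|K_{t,t',\sigma}(x,y)|$ decays geometrically in $x$ and $y$ is not true in general: for $k$ negative with $|k|$ comparable to $x$ the exponent $|x+k|$ can vanish, so the decay of the entries along the diagonal is governed by the tail of $\sum_{k<0}\sigma(k)$, which condition (\ref{cond:l1}) makes summable but not geometric. This does not affect the factorization argument, which never needs pointwise geometric decay of the entries, but that sentence should be dropped or weakened in a final write-up.
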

The above proposition allows us to consider, for $n \in \Z$, the Fredholm determinant
\begin{align} \label{def:tau-sigma}
\tau_n (t,t';\sigma) := Z_{t,t'} \det(1 - K_{t,t',\sigma})_{\ell^2 \{n+1/2, n+3/2, \dots\}}, 
\end{align}
which is expanded as
\begin{align}
    \det(1 - K_{t,t',\sigma})_{\ell^2 \{n+1/2, n+3/2, \dots\}} = \sum_{m \geq 0} (-1)^m \sum_{\{x_1,\dots,x_m \} \subset \Z'} \det(K_{t,t',\sigma}(x_i,x_j))_{i,j=1}^m.
\end{align}
When the function $\sigma$ is the indicator function of the positive half integer $\mathfrak{1}_{\Z'_{>0}}$, we write simply 
\begin{align} \label{eq:defKshort}
   K_{t,t'} := K_{t,t', \mathfrak{1}_{\Z'_{>0}}}. 
\end{align}
We recall in Section \ref{section:schur} below that the kernel $K_{t,t'}$ is the correlation kernel of the determinantal point process given by the image of the Schur measure with parameters $t$ and $t'$ on the space of configurations on $\Z'$. As announced, the Fredholm determinant (\ref{def:tau-sigma}) is the expectation of a multiplicative functional for the latter Schur measure, see Section \ref{section:schur} for definitions and notation.
\begin{prop} \label{prop:prop2} Let $\P_{t,t'}$ be the Schur measure with parameters $t,t'$.
Then, we have
\begin{align*}
\tau_n(t,t';\sigma) = Z_{t,t'} \E_{\P_{t,t'}} \left[ \prod_{x \in \mathfrak{S}_0(\lambda)} (1- \sigma(x-n) )\right].
\end{align*}
\end{prop}
Our main result is the following.
\begin{thm} \label{thm:thm1} The functions $\tau_n(t,t';\sigma)$ are $\tau$-functions of the 2D Toda lattice hierarchy, in the sense that they satisfy bilinear Hirota equations:
\begin{multline} \label{eq:hirota}
[z^{l-m}]\gamma(z^{-1},-2s')\tau_{m+1}(t+s,t'+s'+\{z\};\sigma) \tau_l(t-s,t'-s'-\{z\} ; \sigma ) \\
=[z^{m-l}] \gamma(z^{-1},2s) \tau_m (t+s -\{z\} , t +s' ; \sigma) \tau_{l+1}(t-s+ \{z \}, t-s'; \sigma ),
\end{multline}
where $s=(s_1,s_2,\dots)$, $s'=(s_1' , s_2' , \dots ) \subset \C$ are sequences satisfying the same convergence condition as $t$ and $t'$, where
\begin{align*}
\{z\} = \left( z, \frac{z^2}{2}, \frac{z^3}{3}, \dots \right),
\end{align*}
and where $[z^k] F(z)$ is the coefficient in $z^k$ in the Laurent series expansion of $F(z)$.
\end{thm}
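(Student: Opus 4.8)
The plan is to express each $\tau_n(t,t';\sigma)$ as a matrix coefficient of an operator acting on the semi-infinite Fock space (the charge-$0$ sector of the fermionic Fock space), and then to derive the Hirota equations from the boson-fermion correspondence together with the fundamental bilinear identity on the fermionic side. Concretely, the abstract proof strategy for 2D Toda tau-functions is always the same: one realizes the tau-function as $\tau_n(t,t') = \langle n | \Gamma_+(t) \, g \, \Gamma_-(t') | n \rangle$ for a fixed element $g$ of the appropriate group (here $GL(\infty)$ or its completion), where $\Gamma_\pm$ are the vertex operators built from the bosonic creation/annihilation operators and $| n \rangle$ is the vacuum of charge $n$. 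The Hirota bilinear equation is then the shadow, under the boson-fermion correspondence, of the single operator identity $\sum_{k} \psi_k \, g \otimes \psi_k^* \, g = \sum_k g \, \psi_k \otimes g \, \psi_k^*$ expressing that $g \otimes g$ commutes with the Casimir-type operator $\sum_k \psi_k \otimes \psi_k^*$ (valid precisely because $g$ lies in the $GL(\infty)$ group).

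First I would use Proposition~\ref{lem:lem1} and the semi-infinite wedge formalism recalled in Section~\ref{section:schur} to exhibit the operator $g = g_\sigma$ whose matrix coefficients reproduce $\tau_n(t,t';\sigma)/Z_{t,t'}$. The Schur-measure expectation of the multiplicative functional $\prod_{x \in \mathfrak{S}_0(\lambda)} (1 - \sigma(x-n))$ should, after the standard dictionary between determinantal correlation functions and fermionic second-quantized operators, be written as $\langle n | \Gamma_+(t) \, \widehat{\sigma} \, \Gamma_-(t') | n \rangle$, where $\Gamma_\pm(t) = \exp(\sum_{k\geq 1} t_k \alpha_{\pm k})$ are the usual vertex operators and $\widehat{\sigma} = \prod_{k \in \Z'} \big( 1 - \sigma(k)\, \psi_k \psi_k^* \big)$ is the diagonal (in the natural basis) second-quantization of the multiplication operator by $1 - \sigma$. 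The key point is that this $\widehat{\sigma}$ is a \emph{diagonal} element, hence manifestly in the relevant group, so that conjugating it by the vertex operators produces an admissible $g$. I would verify that the prefactor $Z_{t,t'}$ and the normal-ordering constants conspire correctly, since $\Gamma_+(t)\Gamma_-(t') = Z_{t,t'} \, {:}\Gamma_+(t)\Gamma_-(t'){:}$, which is exactly where the factor $Z_{t,t'}$ in the definition~\eqref{def:tau-sigma} comes from.

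Next I would insert the shift arguments $t \pm s$, $t' \pm s' \pm \{z\}$ appearing in~\eqref{eq:hirota} by absorbing them into additional vertex operators $\Gamma_\pm$, using the commutation relations $\Gamma_+(t) \Gamma_-(t') = e^{\sum_k k t_k t_k'} \Gamma_-(t') \Gamma_+(t)$ and the identity $\Gamma_-(\{z\}) = \sum_k z^k \Gamma_-$-shift operators, so that the vertex operator with spectral parameter $z$ produces, under the correspondence, precisely a single fermion $\psi(z) = \sum_k \psi_k z^k$ (respectively $\psi^*(z)$); the factors $\gamma(z^{-1},\mp 2s')$ and $\gamma(z^{-1},\pm 2s)$ in~\eqref{eq:hirota} arise from commuting these extra vertex operators past one another. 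With both sides of~\eqref{eq:hirota} rewritten as matrix coefficients of $g \otimes g$ paired against $\psi(z) \otimes \psi^*(z)$ and its transpose, the equation reduces to the bilinear fermionic identity $[\,\sum_k \psi_k \otimes \psi_k^*\,,\, g \otimes g\,] = 0$, which holds for any $g$ in the group and in particular for our diagonal $g_\sigma$ dressed by vertex operators. Extracting the coefficients $[z^{l-m}]$ and $[z^{m-l}]$ then yields exactly the stated Hirota form.

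The main obstacle, and the step requiring the most care, is \emph{analytic} rather than algebraic: unlike the finite-rank or polynomial settings, here $\sigma$ is an infinite collection of parameters and $\widehat{\sigma}$ is an infinite product, so I must justify that $g_\sigma$ genuinely belongs to the completed group for which the bilinear identity is valid, that the relevant matrix coefficients converge, and that the Fredholm determinant in~\eqref{def:tau-sigma} is a convergent expansion of the formal tau-function. This is controlled by the trace-class statement (the Proposition preceding~\eqref{def:tau-sigma}), the summability condition~\eqref{cond:l1} on $\sigma$, the hypothesis $\sum_n n|t_n||t_n'| < \infty$ (and its analogue for $s,s'$), and the analyticity of $J(z;t,t')$ near the unit circle, which together guarantee that all the vertex-operator products and the insertion of $\psi(z), \psi^*(z)$ act on genuine Fock-space states and that coefficient extraction in $z$ is legitimate. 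I would isolate these convergence estimates into a lemma so that the algebraic heart of the argument --- the $GL(\infty)$ bilinear identity --- can be applied verbatim once $g_\sigma$ is known to be admissible.
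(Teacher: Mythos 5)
Your proposal follows essentially the same route as the paper: the paper realizes $\tau_n(t,t';\sigma)$ as the matrix coefficient $\langle \Gamma_+(-t) A_\sigma \Gamma_-(-t') v_n, v_n\rangle$ with the diagonal operator $A_\sigma$ built from $\prod_k(1-\sigma(k)\psi_k\psi_k^*)$ (Lemma \ref{lem:lem2}, via Proposition \ref{lem:lem1}), and then derives the Hirota equations from the commutation of $A_\sigma\otimes A_\sigma$ with $\Psi=\sum_k\psi_k\otimes\psi_k^*$ together with the boson-fermion correspondence and insertions of $\Gamma_-(\pm s)$, $\Gamma_-(\pm s')$ (Proposition \ref{prop:schurtau}), exactly as you outline. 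The only difference is one of emphasis: you foreground the analytic admissibility of the infinite product, which the paper treats lightly.
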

Observe that (\ref{eq:hirota}) is a partial differential equation in the parameters $t$ and $t'$, namely the one from \cite{uenotakasaki}, Theorem 1.11. Indeed, Taylor formula implies that, for an analytic function $f(x)=f(x_1,x_2,\dots)$, we have
\begin{align*}
f(x+ \{z\}) = \sum_{ n \geq 0} \mathrm{p}_n \left( \partial_{x_1} , \frac{\partial_{x_2}}{2}, \dots \right) (f) z^n,
\end{align*}
where the polynomial $\mathrm{p}(a_1,a_2,\dots)$ is defined by
\[\exp \left( \sum_{k \geq 1} a_k z^k \right) = \sum_{n \geq 0} \mathrm{p}_n(a_1,a_2,\dots) z^n. \]

\begin{rem}We decided here to adopt an analytic rather than purely formal point of view. The convergence conditions for the sets of parameters $t$ and $t'$ could certainly be weakened, as they are not strictly necessary for the Schur measures to exist and to have determinantal expressions for their correlation functions. A necessary condition for the Schur measures to be well defined is that the sum
\begin{align*}
    \sum_{n \geq 1} nt_n t'_n,
\end{align*}
is convergent, which is achieved if for instance the series $\sum_{n \geq 1}t_nz^n + t_n'z^{-n}$ belongs to the Sobolev space $H^{1/2}(\T)$. However, it is not clear to us whether the latter condition is sufficient for the determinantal identities we need to hold, in particular for the relevant operators to be Hilbert-Schmidt, see Sec. \ref{sec:proofofprops} and Prop. \ref{prop:hilbertschmidt}; a sufficient condition would be that this series belongs to $L^\infty(\T)$. So, instead of searching for the most general conditions for the sets of parameters $t$ and $t'$, we decided to use a quite strong convergence condition, which makes all the arguments concerning analytic quantities clear. In particular, we can establish that the tau-functions are analytic in the time parameters $t_1,\dots,t_1',\dots$ if they are sufficiently small, see Remark \ref{rem:tauanalytic} below.
\end{rem}
\subsection{Proof's strategy and organisation of the paper}
The proof of Theorem \ref{thm:thm1} first lies on the fact that the function $\tau_n(t,t';\sigma)$ can be seen as the average of a multiplicative functional  under a Schur measure, see Theorem \ref{thm:okounkov} and Proposition \ref{prop:prop2}. The argument is then based on the semi-infite wedge space formalism of the fermionic Fock space from which we can construct tau-functions for the 2D Toda lattice hierarchy, and we show that multiplicative functionals of Schur measures fit into that formalism.\\

We recall the necessary material on Schur measures in Section \ref{section:schur}, where we also describe their finite temperature generalizations and view them as determinantal point processes with correlation kernel $K_{t,t',\sigma}$ for a particular choice of the function $\sigma$, see Theorem \ref{thm:borodin}.\\

Section \ref{sec:proofofprops} is devoted to the proofs of Propositions \ref{prop:prop1} and \ref{prop:prop2}, where we also show that the function $\tau_n(t,t';\sigma)$ is analytic in each of the variables $t_k,t_k'$ near the origin.\\

We further recall the semi-infinite wedge formalism in Section \ref{section:infinitewedge} and express the correlation functions for the Schur measures as vacuum expectation values of some specific operators on the fermionic Fock space, see Proposition \ref{prop:schurpsi}. By means of the Boson-Fermion correspondence, we then see in Proposition \ref{prop:schurtau} how to generate solutions of the bilinear Hirota equations (\ref{eq:hirota}) from expectation values of a particular abstract operator on the fermionic Fock space satisfying a commutation condition.\\

We conclude the proof of Theorem \ref{thm:thm1} in Section \ref{section:proof} by showing that the functions $ \tau_n(t,t';\sigma)$ fit into the formalism of Propositions \ref{prop:schurpsi} and \ref{prop:schurtau}, see Lemmas \ref{lem:commutation} and \ref{lem:expansion}.\\

\subsection*{Acknowledgement} I am deeply grateful to Mattia Cafasso, Giulio Ruzza, and Tamara Grava for enlightening discussions.

This project received funding from project ULIS 2023-09915 from R\'egion Pays de la Loire, and of the fellowship "Assegni di ricerca FSE SISSA 2019"  from Fondo Sociale Europeo - Progetto SISSA OPERAZIONE 1 codice FP195673001.

The author acknowledges support by FNRS Research Project T.0028.23 and by the Fonds Sp\'ecial de Recherche of UCLouvain.

\section{On Schur measures and finite temperature Schur measures} \label{section:schur}
\subsection{Schur measures}
A partition, or equivalently a Young diagram, is a non-increasing sequence $\lambda= (\lambda_1 \geq \lambda_2 \geq \dots)$ of non-negative integers that is eventually zero. The length of a Young diagram is the index of its last non-zero entry and is denoted by $l (\lambda)$. The set of all Young diagrams is denoted by $\Y$. To a Young diagram and an integer $n \in \Z$, we associate a subset of $\Z'$ by
\begin{align}
\lambda \mapsto \mathfrak{S}_n(\lambda) := \{ \lambda_i -i +1/2 + n \}.
\end{align}

We let $\Lambda$ be the algebra over $\C$ of symmetric functions, see e.g. the book \cite{macdonald} for the definitions and for the proofs of the identities used in this section. It is spanned by the Newton power sums defined by
\begin{align*}
p_k( \mathrm{x}_1, \mathrm{x}_2, \dots ) = \sum_{ i =1}^{ + \infty} \mathrm{x}_i^k, \quad k=1,2,\dots
\end{align*}
The Schur functions $s_\lambda$, $\lambda \in \Y$, are defined by the Jacobi-Trudi Formula
\begin{align*}
s_\lambda := \det( h_{\lambda_i -i + j} )_{i,j=1}^N, \quad N \geq l(\lambda),
\end{align*}
where $h_k$ is the complete homogeneous function:
\begin{align*}
h_k( \mathrm{x}_1,\mathrm{x}_2,\dots) = \sum_{i_1 \leq i_2 \leq \dots \leq i_k} \mathrm{x}_{i_1} \cdots x_{i_k}, \quad k \in \Z
\end{align*}
with the convention that $h_0 =1$ and $h_k=0$ for $k<0$.\\

For $\lambda \in \Y$ and $t=t_1,t_2,\dots \subset \C$, we write $s_\lambda(t)$ for the image of the Schur function $s_\lambda$ given by the algebra morphism from $\Lambda$ to $\C$ defined by
\begin{align*}
p_k \mapsto \frac{1}{k}t_k, \quad k=1,2, \dots
\end{align*}
At the formal level, we have the Cauchy identity
\begin{align*}
    \sum_{ \lambda \in \Y} s_\lambda(\mathrm{x}_1,\mathrm{x}_2,\dots) s_\lambda(\mathrm{y}_1,\mathrm{y}_2,\dots) = \prod_{i,j=1}^{\infty} \frac{1}{1-\mathrm{x}_i\mathrm{y}_j},
\end{align*}
which, under suitable convergence conditions, translates into
\begin{align*}
    \sum_{\lambda \in \Y} s_\lambda(t)s_\lambda(t')=Z_{t,t'},
\end{align*}
where $Z_{t,t'}$ is defined in \eqref{eq:defZ}.
\begin{defi}The Schur measure with parameters $t, t'$ is the signed probability measure $\P_{t,t'}$ on $\Y$ given by
\begin{align*}
\P_{t,t'}(\lambda) := Z_{t,t'}^{-1} s_\lambda(t) s_\lambda(t').
\end{align*}
\end{defi}
\subsection{Finite temperature Schur measures}
The Schur measures may be generalized as follows, see \cite{borodinperiodic}, \cite{beteabouttier}. For Young diagrams $\mu, \lambda \in \Y$, we write $\mu \subset \lambda$ if $\mu_i \leq \lambda_i$ for all $i=1,2 \dots$, and denote by $\lambda / \mu$ the sequence of non-negative integers $(\lambda_1 - \mu_1, \lambda_2 - \mu_2,\dots)$. The skew Schur function indexed by the skew diagram $\lambda / \mu$ is defined by
\begin{align*}
s_{\lambda / \mu} = \det ( h_{ \lambda_i - i - \mu_j + j} )_{i,j =1}^N, \quad N \geq l(\lambda).
\end{align*}
\begin{defi}For $u \in [0,1)$, and sets of parameters $t,t' \subset \C$, the \emph{finite temperature Schur measure} $\P_{u,t,t'}$  with parameters $u,t,t'$ is the signed probability measure on $\Y$ defined as
\begin{align*}
\P_{u,t,t'}(\lambda)= Z_{u,t,t'}^{-1} \sum_{  \mu \subset \lambda} u^{|\mu|} s_{\lambda/ \mu}(t) s_{\lambda /  \mu} (t') , \quad \lambda \in \Y.,
\end{align*}
where $Z_{u,t,t'}$ is the normalizing constant given by
\begin{align*}
    Z_{t,t',u}=\prod_{k \geq 1} \frac{1}{1-u^k}Z_{u^k/(1-u^k)t,t'.}
\end{align*}
\end{defi}
Observe that for $u=0$, one recovers the definition of Schur measures.
\subsection{Schur measures and finite temperature Schur measures as determinantal point processes}
One of the main results of Okounkov in \cite{okounkovschurmeasures} is the following theorem.
\begin{thm} \label{thm:okounkov} For any finite set $X= \{x_1, \dots, x_m\} \subset \Z'$, we have
\begin{align*}
\P_{t,t'} ( X \subset \mathfrak{S}_0( \lambda ) ) = \det \left( K_{t,t'} (x_i, x_j) \right)_{i,j=1}^m,   
\end{align*}
where the correlation kernel $K_{t,t'}$  is given by (\ref{eq:defK}) and (\ref{eq:defKshort}):
\begin{align*}
K_{t,t'} (x,y) = \sum_{ k \in \Z', \hspace{0.1cm} k>0} J_{x+k}(t,t') J_{-y-k}(-t,-t').
\end{align*}
\end{thm}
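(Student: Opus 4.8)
The statement is Okounkov's theorem identifying the pushforward of $\P_{t,t'}$ under $\lambda\mapsto\mathfrak{S}_0(\lambda)$ as a determinantal point process, and the natural plan is to reproduce his argument inside the semi-infinite wedge space $\Lambda^{\infty/2}V$ recalled in Section \ref{section:infinitewedge}. I will use the orthonormal basis $\{v_\lambda\}_{\lambda\in\Y}$ of the charge-zero subspace, the fermionic operators $\psi_k,\psi_k^*$ ($k\in\Z'$) with $\{\psi_k,\psi_l^*\}=\delta_{k,l}$, $\psi_k v_\emptyset=0$ for $k<0$ and $\psi_k^* v_\emptyset=0$ for $k>0$, and the vertex operators $\Gamma_\pm(t)=\exp\left(\sum_{n\geq1}t_n\alpha_{\pm n}\right)$. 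The inputs I will take from that formalism are the matrix elements $\langle v_\lambda|\Gamma_-(t)|v_\emptyset\rangle=s_\lambda(t)$ and $\langle v_\emptyset|\Gamma_+(t')|v_\lambda\rangle=s_\lambda(t')$, the commutation relation $\Gamma_+(t')\Gamma_-(t)=Z_{t,t'}\,\Gamma_-(t)\Gamma_+(t')$, and the vacuum invariances $\Gamma_+(t')v_\emptyset=v_\emptyset$, $\langle v_\emptyset|\Gamma_-(t)=\langle v_\emptyset|$.

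First I would rewrite the correlation as a single vacuum expectation. Since $\psi_x\psi_x^*$ acts diagonally as the occupation indicator, $\psi_x\psi_x^*v_\lambda=\mathbf{1}[x\in\mathfrak{S}_0(\lambda)]\,v_\lambda$, the product $\prod_{i=1}^m\psi_{x_i}\psi_{x_i}^*$ projects onto partitions whose image contains $X$. Inserting the resolution of the identity $\sum_\lambda v_\lambda\langle v_\lambda|$ on either side of this product and using $\P_{t,t'}(\lambda)=Z_{t,t'}^{-1}s_\lambda(t)s_\lambda(t')$, I obtain
\[ \P_{t,t'}(X\subset\mathfrak{S}_0(\lambda))=Z_{t,t'}^{-1}\Big\langle v_\emptyset\Big|\Gamma_+(t')\Big(\prod_{i=1}^m\psi_{x_i}\psi_{x_i}^*\Big)\Gamma_-(t)\Big|v_\emptyset\Big\rangle, \]
with $Z_{t,t'}=\langle v_\emptyset|\Gamma_+(t')\Gamma_-(t)|v_\emptyset\rangle$.

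The heart of the proof is then Wick's theorem. The normalised functional $A\mapsto Z_{t,t'}^{-1}\langle v_\emptyset|\Gamma_+(t')A\,\Gamma_-(t)|v_\emptyset\rangle$ is a gauge-invariant quasi-free state on the fermion algebra: conjugating $\psi_k,\psi_k^*$ by $\Gamma_\pm$ keeps them linear in the fermions, so odd moments vanish and the $2m$-point function factorises as $\det(\tilde K(x_i,x_j))_{i,j=1}^m$ with two-point kernel $\tilde K(x,y)=Z_{t,t'}^{-1}\langle v_\emptyset|\Gamma_+(t')\psi_x\psi_y^*\Gamma_-(t)|v_\emptyset\rangle$. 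To evaluate $\tilde K$ I pass to the generating fields $\psi(z)=\sum_k\psi_k z^k$, $\psi^*(w)=\sum_k\psi_k^*w^{-k}$ and use the conjugation rules $\Gamma_\pm(s)\psi(z)\Gamma_\pm(s)^{-1}=\gamma(z^{\pm1},s)\psi(z)$ and $\Gamma_\pm(s)\psi^*(z)\Gamma_\pm(s)^{-1}=\gamma(z^{\pm1},s)^{-1}\psi^*(z)$, which follow from $[\alpha_n,\psi(z)]=z^n\psi(z)$ and $[\alpha_n,\psi^*(z)]=-z^n\psi^*(z)$. Commuting $\Gamma_+(t')$ to the right and $\Gamma_-(t)$ to the left through the two fields and cancelling $Z_{t,t'}$ via the commutation relation reduces the computation to the free two-point function $\langle v_\emptyset|\psi(z)\psi^*(w)|v_\emptyset\rangle=\sum_{k<0}(z/w)^k$ multiplied by the scalar factor $J(z;t',t)\,J(w;-t',-t)$, where I have recognised the $z$- and $w$-dependent exponentials as the functions $J$ via (\ref{eq:J1})--(\ref{eq:J2}).

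Finally I extract the coefficient of $z^x w^{-y}$: expanding $J(z;t',t)=\sum_a J_a(t',t)z^a$ and $J(w;-t',-t)=\sum_b J_b(-t',-t)w^b$ against $\sum_{k<0}z^k w^{-k}$ gives $\tilde K(x,y)=\sum_{k>0}J_{x+k}(t',t)J_{-y-k}(-t',-t)$, which by the $t\leftrightarrow t'$ invariance of the kernel noted after (\ref{eq:defK}) (equivalently (\ref{eq:J3})--(\ref{eq:J4})) is exactly $K_{t,t',\mathfrak{1}_{\Z'_{>0}}}(x,y)$. I expect the two delicate points to be the rigorous justification of Wick's theorem --- establishing quasi-freeness together with the trace-class and convergence estimates that let $Z_{t,t'}$ absorb the normal-ordering constant --- and the careful bookkeeping of the annulus of convergence when extracting Laurent coefficients from the product of the three series; both are controlled by the analyticity of $J(z;t,t')$ near $\T$ and the hypothesis $\sum_n n|t_n||t_n'|<\infty$.
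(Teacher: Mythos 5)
Your proposal is correct, but note that the paper does not actually prove Theorem \ref{thm:okounkov}: it is imported verbatim as ``one of the main results of Okounkov in \cite{okounkovschurmeasures}'', and the only related ingredient the paper records is Proposition \ref{prop:schurpsi} (also stated without proof), which is precisely your first displayed identity expressing $\P_{t,t'}(X\subset\mathfrak{S}_0(\lambda))$ as a normalized vacuum matrix coefficient of $\Gamma_+\,\prod_i\psi_{x_i}\psi_{x_i}^*\,\Gamma_-$. Your reconstruction --- insertion of the resolution of identity to get that matrix coefficient, Wick's theorem for the quasi-free state obtained by conjugating the fermion fields by the vertex operators, and evaluation of the two-point function via the generating fields $\psi(z),\psi^*(w)$ and the conjugation rules --- is exactly Okounkov's original argument, and all the coefficient extractions check out ($\frac{\gamma(z,t')}{\gamma(z^{-1},t)}=J(z;t',t)$, $\frac{\gamma(w^{-1},t)}{\gamma(w,t')}=J(w;-t',-t)$, and the contraction against $\sum_{k<0}z^kw^{-k}$ gives $\sum_{k>0}J_{x+k}(t',t)J_{-y-k}(-t',-t)$). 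One small caveat: the substitution $t\leftrightarrow t'$ via (\ref{eq:J3}) actually returns the transposed kernel $K_{t,t',\mathfrak{1}_{\Z'_{>0}}}(y,x)$ rather than $K_{t,t',\mathfrak{1}_{\Z'_{>0}}}(x,y)$ (the paper's claim of ``invariance'' after (\ref{eq:defK}) is loose on this point); since the statement concerns determinants of the form $\det(K(x_i,x_j))$, transposition is harmless, but you should say so rather than invoke invariance of the kernel itself.
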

The above result has been generalized by Borodin in \cite{borodinperiodic} (see also \cite{beteabouttier}) to finite temperature Schur measures as follows.
\begin{thm} \label{thm:borodin} Let $u \in [0,1)$ and consider the parameters
\begin{align*}
t = \frac{1}{1-u}( \tilde{t}_1, \tilde{t}_2, \dots) , \quad t' =  \frac{1}{1-u}(\tilde{t}'_1, \tilde{t}'_2, \dots ),
\end{align*}
for fixed $\tilde{t}_k, \tilde{t}'_k$, $k=1,2,\dots$. Let $\P_{u, \tilde{t}, \tilde{t}'}$ be the finite temperature Schur measure with parameters $u, \tilde{t}, \tilde{t}'$. Let $c$ be a $\Z$-valued random variable defined on some probability space $(\Omega, P)$ such that
\begin{align*}
P(c=n) =  \frac{u^{ \frac{n^2}{2}}}{\theta_3(1, u)},
\end{align*}
where 
\begin{align*}
\theta_3(1,u)= \sum_{ n \in \Z} u^{\frac{n^2}{2}}.
\end{align*}
Then we have for any finite set $X = \{x_1, \dots x_m \} \subset \Z'$,
\begin{align*}
\P_{u,\tilde{t}, \tilde{t}'} \otimes P ( X \subset \mathfrak{S}_c (\lambda) ) = \det \left( K_{t,t',\sigma_u} (x_i, x_j) \right)_{i,j=1}^m,
\end{align*}
where $K_{t,t',\sigma_u}$ is given by (\ref{eq:defK}) with
\begin{align*}
\sigma_u(k) = \frac{1}{1-u^k}, \quad k \in \Z'.
\end{align*}
\end{thm}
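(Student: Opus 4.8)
The plan is to prove Theorem \ref{thm:borodin} by realizing the finite temperature Schur measure together with the shift mixing by $c$ as a \emph{grand canonical (thermal) ensemble of free fermions}, and then reading off the determinantal correlations from Wick's theorem for quasi-free states. This is the natural route since, exactly as in \cite{okounkovschurmeasures}, the Laurent coefficients $J_k(t,t')$ of $J(z;t,t')$ entering the kernel \eqref{eq:defK} already arise from the action of vertex operators on the fermionic field; the only genuinely new feature compared with Theorem \ref{thm:okounkov} is that the reference state is a Gibbs state rather than the vacuum, and it is this replacement that will produce the occupation function $\sigma_u$ and the charge $c$.

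First I would rewrite the weight combinatorially. With the standard vertex operators $\Gamma_\pm$, for which $s_{\lambda/\mu}(t)=\langle\lambda|\Gamma_-(t)|\mu\rangle=\langle\mu|\Gamma_+(t)|\lambda\rangle$, the inner sum $\sum_{\mu\subset\lambda}u^{|\mu|}s_{\lambda/\mu}(\tilde t)\,s_{\lambda/\mu}(\tilde t')$ defining $\P_{u,\tilde t,\tilde t'}(\lambda)$ becomes the diagonal matrix element $\langle\lambda|\Gamma_-(\tilde t)\,u^{H}\,\Gamma_+(\tilde t')|\lambda\rangle$, where $H$ is the excitation-energy operator acting by $H|\mu\rangle=|\mu|\,|\mu\rangle$. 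Next I would incorporate the random charge: summing over all charge sectors of the Fock space, weighted so that the charge-$n$ vacuum carries $u^{n^2/2}$, simultaneously produces the partition function $\theta_3(1,u)=\sum_n u^{n^2/2}$, identifies the sector label with the variable $c$ of law $P(c=n)=u^{n^2/2}/\theta_3(1,u)$, and replaces $\mathfrak S_0(\lambda)$ by the shifted Maya diagram $\mathfrak S_c(\lambda)$. In operator form the goal is to exhibit
\[
\P_{u,\tilde t,\tilde t'}\otimes P\big(X\subset\mathfrak S_c(\lambda)\big)=\frac{\mathrm{Tr}\!\left(u^{L_0}\,\Gamma_-(\tilde t)\,\Gamma_+(\tilde t')\prod_{x\in X}\psi_x^{*}\psi_x\right)}{\mathrm{Tr}\!\left(u^{L_0}\,\Gamma_-(\tilde t)\,\Gamma_+(\tilde t')\right)},
\]
where $L_0$ comprises the excitation energy and a charge-squared term, the operators $\psi_x^{*}\psi_x$ are the occupation numbers at the observed sites, and the trace converges precisely because $u\in[0,1)$. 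This identifies the object as an observable of a periodic Schur process in the sense of \cite{borodinperiodic}, \cite{beteabouttier}.

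The determinantal identity then follows from Wick's theorem: the expectation of a product of fermion bilinears in the quasi-free Gibbs state proportional to $u^{L_0}$ factorizes as the determinant of the two-point function, so the left-hand side equals $\det\big(K(x_i,x_j)\big)_{i,j=1}^m$ with $K$ the one-particle correlation kernel of that state. To match $K$ with $K_{t,t',\sigma_u}$ I would conjugate the fermionic field by $\Gamma_-(\tilde t)\Gamma_+(\tilde t')$; exactly as in the proof of Theorem \ref{thm:okounkov}, this intertwining multiplies $\psi(z)$ by $J(z;t,t')$ and thereby brings in the coefficients $J_{x+k}(t,t')$ and $J_{-y-k}(-t,-t')$, while the remaining factor is the thermal two-point function of the free fermions, a Fermi--Dirac occupation that is precisely the function $\sigma_u$ of the statement. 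Summing over $k$ reconstructs \eqref{eq:defK}, and the passage from the per-slice parameters $\tilde t,\tilde t'$ to the spectral parameters $t=\tilde t/(1-u)$, $t'=\tilde t'/(1-u)$ emerges from commuting the vertex operators through $u^{H}$ and resumming the resulting geometric series.

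The hard part will be the charge bookkeeping together with convergence. One must verify that tracing over \emph{all} charge sectors with the Gaussian weight genuinely decomposes into $\P_{u,\tilde t,\tilde t'}(\lambda)$ times the law of $c$ and the occupation of $\mathfrak S_c(\lambda)$, rather than collapsing onto a single sector, and that this decomposition is exactly what yields the normalization $\theta_3(1,u)$; this is also the step that pins down the stated rescaling of the parameters and the precise form of $\sigma_u$. A secondary technical point is to justify the Wick (determinantal) formula at the level of the infinite-dimensional Fock space and to control the traces, for which the trace-class property recorded before \eqref{def:tau-sigma} and the summability condition \eqref{cond:l1} on $\sigma_u$ are the relevant inputs.
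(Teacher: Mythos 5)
The paper does not prove Theorem \ref{thm:borodin}: it is quoted from \cite{borodinperiodic} (see also \cite{beteabouttier}), so there is no internal proof to compare yours against. Your outline is essentially the argument of those references --- rewrite the weight as $\langle\lambda|\Gamma_-(\tilde t)\,u^{H}\,\Gamma_+(\tilde t')|\lambda\rangle$ (note you have correctly read the paper's definition of $\P_{u,\tilde t,\tilde t'}$ as a product of \emph{two} skew Schur functions; the printed triple product is evidently a typo), sum over charge sectors with Gaussian weight to get the shift-mixed grand canonical free-fermion ensemble, apply Wick's theorem for the quasi-free Gibbs state, and dress the fields with vertex operators to produce the coefficients $J_{x+k}(t,t')$ and $J_{-y-k}(-t,-t')$ --- and it is sound in outline. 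Two concrete points to watch when you execute the steps you flag as hard. First, in this paper's conventions the occupation number at site $x$ is $\psi_x\psi_x^*$, not $\psi_x^*\psi_x$; this only affects bookkeeping. Second, the geometric resummation obtained by commuting $\Gamma_\pm$ through $u^{H}$ dresses the $n$-th parameter by $1/(1-u^{n})$ (not a uniform $1/(1-u)$), and the thermal two-point function at site $k$ is a Fermi factor of the form $1/(1+u^{-k})\in[0,1]$ rather than $1/(1-u^{k})$, which exceeds $1$ for $k>0$ and would violate the standing requirement $\sigma:\Z'\to[0,1]$ and the summability condition (\ref{cond:l1}). Your computation will therefore not reproduce the statement exactly as printed; the printed $t=\tilde t/(1-u)$ and $\sigma_u(k)=1/(1-u^{k})$ appear to be typos, and you should record the forms your derivation actually yields, checking them against \cite{beteabouttier}.
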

\section{Proof of Propositions \ref{prop:prop1} and \ref{prop:prop2}} \label{sec:proofofprops}
For both the proofs of Prop. \ref{prop:prop1} and \ref{prop:prop2}, we note that the operator $K_{t,t',\sigma}$ can be factorized as:
\begin{align} \label{eq:factorKsigma}
    K_{t,t',\sigma} = H_{t,t'}\sigma \check{H}_{t,t'},
\end{align}
where $H_{t,t'}$ and $\check{H}_{t,t'}$ are the Hankel operators acting on a function $f \in \ell^2(\Z')$ by
\begin{align*}
H_{t,t'} f (x) &= \sum_{ k \in \Z'} J_{k+x}(t,t') f(k), \\
\check{H}_{t,t'} f(x) &= \sum_{k \in \Z'} J_{-k - x} (-t,-t') f(k) \\
&=\sum_{k \in \Z'}J_{k+x}(t',t)f(k)= H_{t',t}f(x),
\end{align*}
where the last line follows from (\ref{eq:J3}). In order to prove Proposition \ref{prop:prop1}, it thus suffices to establish the following
\begin{prop} \label{prop:hilbertschmidt}
    The operators $\mathfrak{1}_{\{n+1/2,n+3/2,\dots\}}H_{t,t'}\sqrt{\sigma}$ and $\sqrt{\sigma}\check{H}_{t,t'} \mathfrak{1}_{\{n+1/2, n + 3/2, \dots\}}$ are Hilbert-Schmidt operators on $\ell^2(\Z')$.
\end{prop}
\begin{proof}
By symmetry, we just prove that the operator $\sqrt{\sigma}\check{H}_{t,t'} \mathfrak{1}_{\{n+1/2, n + 3/2, \dots\}}$ is Hilbert-Schmidt, i.e., that the sum
\begin{align*}
    \sum_{x \in \Z', \, y\geq n+1/2} \sigma(x) |J_{x+y}(t',t)|^2
\end{align*}
is finite. We will treat the sum over positive and negative $x \in \Z'$ separately. By the Cauchy formula, we have
\begin{align*}
    |J_{x+y}(t',t)|^2 &=  \int_\T \frac{dw}{2i \pi}\int_\T \frac{dz}{-2i \pi} J(z;t',t)\overline{J(w;t',t)} \left(\frac{w}{z}\right)^{x+y+1} \\
    &=\int_\T \frac{dw}{2i \pi}\int_\T \frac{dz}{-2i \pi} J(z;t',t)J(w;\overline{t},\overline{t'}) \left(\frac{w}{z}\right)^{x+y+1}.
\end{align*}
Since the function $J$ is assumed to be analytic in a neighborhood of the unit circle, we can deform the contour of integration for $z$ into the circle of radius $1+\epsilon$, denoted below by $\T_\epsilon$, for $\epsilon>0$ small enough. Since we now have $|z| > |w|$ for all $z \in \T_\epsilon$, $w \in \T$, we can sum over $y \geq n+1/2$ and use Fubini to obtain
\begin{align*}
    \sum_{ y \geq n+1/2} |J_{x+y}(t',t)|^2 =  \int_\T \frac{dw}{2 i\pi}  \int_{\T_\epsilon}\frac{dz}{-2i \pi}  \, J(z;t',t) J(w;\overline{t},\overline{t'})\left(\frac{w}{z}\right)^{n+x+3/2} \frac{z}{z-w}.
\end{align*}
We deduce that there exists $C>0$ such that for all $x \in \Z'$, we have
\begin{align*}
    \sum_{ y \geq n+1/2} |J_{x+y}(t',t)|^2  \leq \frac{C}{(1 + \epsilon)^x},
\end{align*}
and since the function $\sigma$ is bounded, we obtain that
\begin{align*}
     \sum_{x \in \Z'_{>0}, \, y\geq n+1/2} \sigma(x) |J_{x+y}(t',t)|^2 < +\infty.
\end{align*}
We treat the sum over negative $x$ as follows. For any $w \in \T$, by a residue computation, we have
\begin{align*}
     \int_{\T_\epsilon} \frac{dz}{2i \pi} \, J(z;t',t) \frac{z}{z^{n+x+3/2}(z-w)} = \int_{\T_\epsilon} \frac{dz}{2 i \pi} \,  J(z;t',t) \frac{1}{z^{n+x+1/2}} +  \frac{J(w;t',t)}{w^{n+x+1/2}},
\end{align*}
from which it follows that
\begin{align*}
    \sum_{ y \geq n+1/2} |J_{x+y}(t',t)|^2 &=  -\int_\T \frac{dw}{2 i \pi} \int_{\T_\epsilon} \frac{dz} {2 i \pi} \,J(z;t',t) J(w;\overline{t},\overline{t'}) \left(\frac{w}{z}\right)^{n+x+1/2}w \\
    &- \int_\T \frac{dw}{2 i \pi} \, J(w,t',t)J(w, \overline{t}, \overline{t'})w.
\end{align*}
The second term is independent of $x$, while we can again deform the contour of integration in the first term to obtain a quantity that is bounded independently of $x$. Since the function $\sigma$ is summable over the negative half-integers, we obtain the desired statement.
\end{proof}
\begin{proof}[Proof of Proposition \ref{prop:prop2}]
By Theorem \ref{thm:okounkov}, we have
\begin{align*}
    \E_{\P_{t,t'}} \left[ \prod_{x\in \mathfrak{S}_0(\lambda)}(1-\sigma(x-n) )\right] = \det(1- \sigma(\cdot-n)K_{t,t'})_{\ell^2(\Z')},
\end{align*}
so it suffices to prove that
\begin{align*}
    \det(1-K_{t,t',\sigma})_{\ell^2\{n+1/2,n+3/2,\dots\}}= \det(1- \sigma(\cdot-n)K_{t,t'})_{\ell^2(\Z')}.
\end{align*}
Using the factorization (\ref{eq:factorKsigma}), we use of the identity
\begin{align*}
\det(1 + AB ) = \det(1 +BA),
\end{align*}
valid when $A$ and $B$ are Hilbert-Schmidt operators. We obtain
\begin{align*}
\det \left( 1 -K_{t,t'\sigma} \right)_{\ell^2 \{n+1/2,n+3/2, \dots \}} &= \det \left( 1 - \mathfrak{1}_{ \{n+1/2,n+3/2, \dots \} } H_{t,t'} \sigma \check{H}_{t,t'} \mathfrak{1}_{ \{n+1/2,n+3/2,\dots \} } \right)_{\ell^2(\Z')}\\
&= \det \left(1 - \sqrt{\sigma} \check{H}_{t,t'} \mathfrak{1}_{\{n+1/2,n+3/2,\dots\}}H_{t,t'} \sqrt{\sigma} \right)_{\ell^2(\Z')}.
\end{align*}
We can now observe that the kernel of the left factor can be expressed as:
\begin{align*}
    \left(\sqrt{\sigma} \check{H}_{t,t'} \mathfrak{1}_{\{n+1/2,n+3/2\dots\}}\right)(x,y)&=\sqrt{\sigma(x)}J_{x+y}(t',t)\mathfrak{1}_{\{n+1/2,\dots\}}(y) \\
    &= \sqrt{\sigma(x)}J_{x+y}(t',t)\mathfrak{1}_{\Z'_{>0}}(y-n) \\
    &=\sqrt{\sigma(x+n-n)} J_{x+n+y-n}(t',t) \mathfrak{1}_{\Z_{>0}}(y-n)\\
    &=\left(T^n\sqrt{\sigma(\cdot-n)}\check{H}_{t,t'}\mathfrak{1}_{\Z_{>0}}\right)(x,y-n),
\end{align*}
where $T: \ell^2(\Z') \to \ell^2(\Z')$ is the shift operator: $Tf(x)=f(x+1), \, f \in \ell^2(\Z')$,  $x \in \Z'$. Similarly, the kernel of the right factor can be written as
\begin{align*}
    \left( \mathfrak{1}_{\{n+1/2,n+3/2,\dots\}} H_{t,t'}\sqrt{\sigma} \right)(y,x') = \left(\mathfrak{1}_{\Z'_{>0}} H_{t,t'}\sqrt{\sigma(\cdot-n)}T^{-n}\right)(y-n,x').
\end{align*}
It follows that
\begin{align*}
    \left(\sqrt{\sigma} \check{H}_{t,t'} \mathfrak{1}_{\{n+1/2,n+3/2,\dots\}}H_{t,t'} \sqrt{\sigma}\right)(x,x')=\left(T^n\sqrt{\sigma(\cdot -n)}\check{H}_{t,t'}\mathfrak{1}_{\Z'_{>0}}H_{t,t'}\sqrt{\sigma(\cdot -n)} T^{-n}\right)(x,x'),
\end{align*}
and thus
\begin{align*}
\det \left( 1 -K_{t,t'\sigma} \right)_{\ell^2 \{n+1/2,n+3/2, \dots \}}&= \det \left( 1- \sqrt{\sigma (\cdot -n) } \check{H}_{t,t'} \mathfrak{1}_{\Z'_{> 0}} H_{t,t'} \sqrt{\sigma( \cdot - n) } \right)_{\ell^2(\Z')} \\
&= \det \left( 1- \sigma( \cdot - n ) \check{ H}_{t,t'} \mathfrak{1}_{ \Z'_{> 0}} H_{t,t'}  \right)_{\ell^2(\Z')}\\
&= \det\left(1- \sigma(\cdot-n)K_{t',t}\right)_{\ell^2(\Z')}\\
&=\det\left(1- \sigma(\cdot -n)K_{t,t'}\right)_{\ell^2(\Z')},
\end{align*}
where the last equality follows from the symmetry $K_{t',t}(x,y)=K_{t,t'}(y,x)$, which does not affect the value of the determinant. The Proposition is proved.
\end{proof}
\begin{rem} \label{rem:tauanalytic}
From the same computations as in the proof of Prop. \ref{prop:prop1}, first using the Cauchy formula and then deforming the contours of integration, we can write
\begin{align*}
    K_{t,t'}(x,y)= \int_{\T} \frac{dw}{2i \pi} \int_{\T_{\epsilon}} \frac{dz}{2i \pi} \frac{J(z;t,t')}{J(w;t,t')} \frac{1}{(z-w)z^{x+1/2}w^{-y+1/2}}
\end{align*}
Then, by similar arguments as in the proof of Prop. \ref{prop:prop1}, we see that uniformly in each $t_k$ and each $t'_k$ in a compact set near the origin, $K_{t,t'}(x,y)$ decays exponentially when $x \neq y$ or $x=y>0$ and remains bounded when $x=y<0$. We deduce that $\tau_n(t,t';\sigma)$ is an analytic function in each of the variables $t_k$, $t_k'$.
\end{rem}
\section{The semi-infinite wedge formalism} \label{section:infinitewedge}
\subsection{The fermionic Fock space}
 Let $\Lambda(\mathbb{Z}')$ be the set of all subsets $S \subset \mathbb{Z}'$ such that $|S^+|:= |\mathbb{Z}'_{>0} \cap S| < +\infty$ and $|S^-|:=|\mathbb{Z}'_{<0} \setminus \left(S \cap \mathbb{Z}'_{<0} \right)| < +\infty$. The fermionic Fock space is by definition the Hilbert space freely spanned by $\Lambda(\mathbb{Z}')$, which we denote by $\Lambda^{\frac{\infty}{2}}(\Z')$. For $S \in \Lambda(\mathbb{Z}')$, we denote by $v_S$ the vector indexed by $S$ and write
\begin{align*}
v_S:= \underline{s_1} \wedge \underline{s_2} \wedge ...
\end{align*}
where $S= \lbrace s_1 > s_2 > ... \rbrace$. The Hilbert space structure on the fermionic Fock space $\Lambda^{\frac{\infty}{2}}(\Z')$ comes from the inner product $\langle.,.\rangle $ such that $v_S$, $S \in \Lambda(\mathbb{Z}')$, is an orthonormal basis.\\

For an integer $n \in \mathbb{Z}$, let $\Lambda_n(\mathbb{Z}')$ be the set of all $S \in \Lambda(\mathbb{Z}')$ such that $|S^+|-|S^-|=n$. We have
\begin{align*}
\Lambda(\mathbb{Z}')= \sqcup_{n \in \mathbb{Z}}\Lambda_n(\mathbb{Z}'),
\end{align*}
and the canonical inclusion $\iota : \Lambda(\mathbb{Z}') \hookrightarrow \Lambda^{\frac{\infty}{2}}(\Z')$ gives rise to a direct sum decomposition of $\Lambda^{\frac{\infty}{2}} (\Z')$, called the charge decomposition:
\begin{align*}
\left.\Lambda \right.^{\frac{\infty}{2}}(\Z')=\oplus_{n \in \mathbb{Z}}\left.\Lambda\right._n^{\frac{\infty}{2}}(\Z')
\end{align*}
where $\Lambda_n^{\frac{\infty}{2}}(\Z')= \iota (\Lambda_n(\mathbb{Z}'))$.\\

Recall that, for each $n \in \Z$, one embeds $\Y$ into $\Lambda_n(\Z')$ by the map
\begin{align*}
\lambda \mapsto \mathfrak{S}_n(\lambda) := \{ \lambda_i -i + 1/2 + n, \hspace{0.1cm} i =1,2,\dots \}.
\end{align*}
We write $v_\lambda:=v_{\mathfrak{S}_0(\lambda)}$. We also use the following notation: for $n \in \mathbb{Z}$, set
\begin{align*}
v_n:=\underline{n-1/2} \wedge
 \underline{n-3/2} \wedge ... \in \Lambda^{\frac{\infty}{2}}_n (\Z').
\end{align*}
Observe that $v_0=v_{\emptyset}$, where $\emptyset$ is the empty Young diagram.

\subsection{Operators on $\Lambda^{\frac{\infty}{2}}(\Z')$}
For $k \in \Z'$, define the creation operator $\psi_k$ as being the operator of exterior multiplication by $\underline{k}$: if $S=\{ s_1 > s_2 > ... \} $,
\begin{equation} \label{eq:defpsi}
\psi_k v_S = \underline{k} \wedge \underline{s_1} \wedge \underline{s_2} \wedge ... \\
 =  \begin{cases} 0 &
\text{ if $k \in S$} \\
(-1)^i \underline{s_1} \wedge \cdots \wedge \underline{s_{i}} \wedge \underline{k} \wedge \underline{s_{i+1}}\wedge \cdots &
\text{ if $s_i > k > s_{i+1}$.}
\end{cases} 
\end{equation}
For $k \in \mathbb{Z}'$, define the annihilation operator $\psi_k^*$ by:
\begin{equation} \label{eq:defpsi*}
\psi_k^*v_S =\begin{cases} 0 &
\text{ if $k \notin S$} \\
(-1)^{i-1}\underline{s_1}\wedge \cdots \wedge \underline{s_{i-1}} \wedge \underline{s_{i+1}}\wedge \cdots &
\text{ if $k=s_i$}.
\end{cases}
\end{equation}
The operator $\psi_k^*$ is the adjoint operator of $\psi_k$, and we have
\begin{equation*}
\psi_k \psi_k^* v_S = \begin{cases} v_S & \text{if $k \in S$} \\
0 & \text{if $k \notin S$},
\end{cases}
\end{equation*}
and
\begin{equation*}
\psi_k^* \psi_k v_S = \begin{cases} v_S & \text{if $k \notin S$} \\
0 & \text{if $k \in S$}
\end{cases}
\end{equation*}
We have the following anti-commutation relations:
\begin{align*}
\psi_k \psi_l^* + \psi_l^*\psi_k = \delta_{kl}, \quad
\psi_k \psi_l + \psi_l \psi_k=0, \quad
\psi_k^*\psi_l^*+\psi_l^*\psi_k^* &=0.
\end{align*}
In particular, for any $\{k_1,\dots, k_m \} \subset \Z'$, the product $ \psi_{k_1}\psi_{k_1}^*\cdots \psi_{k_m}\psi_{k_m}^*$ does not depend on the order of the $k_i$'s. \\

The charge operator $C$ can be defined by
\begin{align*}
C:= \sum_{ k \in \Z'_{>0}} \psi_k \psi_k^* - \psi_{-k}^* \psi_{-k},
\end{align*}
and we have
\begin{align*}
\ker (C- nI) = \Lambda_n^{\frac{\infty}{2}}(\Z').
\end{align*}

The shift operator is defined by
\begin{align*}
R v_S = \underline{s_1 +1} \wedge \underline{s_2 +1} \wedge \dots, \quad S= \{s_1> s_2> \dots \}  \in \Lambda(\Z').
\end{align*}

We introduce the bosonic operators on $\Lambda^{\frac{\infty}{2}}(\Z')$ as follows: for $n \in \mathbb{Z}\setminus \lbrace 0 \rbrace$, we define the operator $\alpha_n$ by:
\begin{align*}
\alpha_n := \sum_{k \in \mathbb{Z}'} \psi_{k-n}\psi_k^*
\end{align*}

We introduce the generating series
\begin{align*}
\psi(z) := \sum_{i \in \mathbb{Z}'} \psi_i z^i, \quad
\psi^*(w) := \sum_{j \in \mathbb{Z}'} \psi_j^*w^{-j}.
\end{align*}
We have the following commutation relations
\begin{align*}
[\alpha_n ,\alpha_m] = n \delta_{n,-m}, \quad
[\alpha_n , \psi(z) ] = z^n \psi(z), \quad
[\alpha_n , \psi^*(w) ] = -w^n \psi^*(w).
\end{align*}

The vertex operators $\Gamma_{\pm}(t)$ are defined by
\begin{align*}
\Gamma_\pm(t)=\exp\left(\sum_{n \geq 1} t_n \alpha_{\pm n}\right).
\end{align*}
They satisfy
\begin{align} \label{eq:vertexcharge}
\Gamma_+(t) v_m = v_m
\end{align}
and the commutation relations
\begin{align} \label{eq:commutevertex}
\Gamma_+(t) \Gamma_-(t') &= Z_{t,t'} \Gamma_-(t') \Gamma_+(t) \\ \label{eq:commutevertex1}
\Gamma_{\pm }(t) \psi(z) &= \gamma ( z^{ \pm 1} , t ) \psi(z) \Gamma_{ \pm}(t) \\ \label{eq:commutevertex2}
\Gamma_{ \pm}(t) \psi^*(z) &= \gamma( z^{ \pm 1} , t)^{-1} \psi^*(z) \Gamma_{ \pm}(t). 
\end{align}
We have the Boson-Fermion correspondence allowing to recover the operators $\psi_k$ from the operators $\alpha_n$:
\begin{align} \label{eq:bosonfermioncorres}
\psi(z) = z^C R \Gamma_-( \{z\} ) \Gamma_+( - \{z^{-1}\} ), \quad
\psi^*(z) = R^{-1} z^{-C} \Gamma_-( -\{z \} ) \Gamma_+( \{z^{-1} \} ).
\end{align}

\subsection{Intermediate results}

We recall an intermediate result from Okounkov on Schur measures, expressing the correlation functions in terms of vacuum expectation values for some operators on the fermionic Fock space, and that is used to prove Thm. \ref{thm:okounkov} above. It is based on the fact that Schur functions themselves are expressed as
\begin{align*}
    s_\lambda(t)=\langle \Gamma_-(t) v_\emptyset,v_\lambda\rangle,
\end{align*}
which follows from the commutation relations \eqref{eq:commutevertex1}, \eqref{eq:commutevertex2}.
\begin{prop} \label{prop:schurpsi} Let $\P_{t,t'}$ be the Schur measure with parameters $t,t'$. Then we have for any finite set $X=\{x_1,\dots, x_m \} \subset \Z'$
\begin{align*}
\P_{t,t'} ( X \subset \mathfrak{S}_0(\lambda) )= Z_{t,t}^{-1} \langle \Gamma_+(t) \psi_{x_1} \psi_{x_1}^* \cdots \psi_{x_m} \psi_{x_m}^* \Gamma_-(t') v_{\emptyset} , v_{\emptyset} \rangle .
\end{align*}
\end{prop}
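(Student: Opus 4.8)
The plan is to express the correlation function as a matrix coefficient in the fermionic Fock space, using the standard formalism from Okounkov's work. First I would recall that the Schur measure can be written in the vertex operator language: the key identity is that $\Gamma_+(t) v_\emptyset = v_\emptyset$ together with the expansion of $\Gamma_-(t') v_\emptyset$ in the basis $v_\lambda$. Specifically, one has
\begin{align*}
\Gamma_-(t') v_\emptyset = \sum_{\lambda \in \Y} s_\lambda(t') v_\lambda, \quad \langle \Gamma_+(t) v_\lambda , v_\emptyset \rangle = \langle v_\lambda , \Gamma_-(t) v_\emptyset \rangle = s_\lambda(t),
\end{align*}
where the second equality uses that $\Gamma_+(t)$ is the adjoint of $\Gamma_-(t)$ and that $s_\lambda$ is real-analytic in the power-sum variables. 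This is the translation of the definition $\P_{t,t'}(\lambda) = Z_{t,t'}^{-1} s_\lambda(t) s_\lambda(t')$ into the wedge formalism.

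Next I would interpret the operator $\psi_{x}\psi_x^*$ as the projection onto those basis vectors $v_S$ with $x \in S$: indeed the relations recalled in the excerpt give $\psi_x \psi_x^* v_S = v_S$ if $x \in S$ and $0$ otherwise. Consequently the product $\psi_{x_1}\psi_{x_1}^* \cdots \psi_{x_m}\psi_{x_m}^*$ acts on $v_\lambda = v_{\mathfrak{S}_0(\lambda)}$ as the indicator that all of $x_1,\dots,x_m$ lie in $\mathfrak{S}_0(\lambda)$, i.e. it equals the characteristic function $\mathfrak{1}_{\{X \subset \mathfrak{S}_0(\lambda)\}}$ times $v_\lambda$. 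The $x_i$ being distinct, the operators $\psi_{x_i}\psi_{x_i}^*$ commute among themselves and each is a projection, so their product is again a projection and no sign subtleties arise.

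Assembling these two observations, I would insert the expansion of $\Gamma_-(t') v_\emptyset$ and compute
\begin{align*}
\langle \Gamma_+(t) \psi_{x_1}\psi_{x_1}^* \cdots \psi_{x_m}\psi_{x_m}^* \Gamma_-(t') v_\emptyset , v_\emptyset \rangle = \sum_{\lambda} s_\lambda(t') \, \mathfrak{1}_{\{X \subset \mathfrak{S}_0(\lambda)\}} \, \langle \Gamma_+(t) v_\lambda , v_\emptyset \rangle,
\end{align*}
which by the matrix coefficient identity above equals $\sum_\lambda s_\lambda(t) s_\lambda(t') \mathfrak{1}_{\{X \subset \mathfrak{S}_0(\lambda)\}} = Z_{t,t'} \, \P_{t,t'}(X \subset \mathfrak{S}_0(\lambda))$. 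Multiplying through by $Z_{t,t'}^{-1}$ (the excerpt writes $Z_{t,t}^{-1}$, but by the symmetry $Z_{t,t'} = Z_{t',t}$ and the normalization this is the intended constant) yields the stated formula. The main obstacle I anticipate is justifying the convergence and the termwise manipulation of the infinite sum over $\lambda$, which rests on the summability condition $\sum_n n|t_n||t_n'| < \infty$ guaranteeing that $\sum_\lambda |s_\lambda(t)||s_\lambda(t')| < \infty$; granting this absolute convergence, interchanging the sum with the inner product and applying the projection identity is routine.
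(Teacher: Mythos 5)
The paper itself gives no proof of this proposition --- it is recalled from Okounkov's \emph{Infinite wedge and random partitions} --- and your argument reproduces the standard proof from that source, so it is the same approach in the only meaningful sense available: expand $\Gamma_-(t')v_\emptyset=\sum_{\lambda\in\Y}s_\lambda(t')v_\lambda$, use that $\psi_{x_1}\psi_{x_1}^*\cdots\psi_{x_m}\psi_{x_m}^*$ is the orthogonal projection onto the span of the $v_S$ with $X\subset S$, and pair with $v_\emptyset$ via $\langle\Gamma_+(t)v_\lambda,v_\emptyset\rangle=s_\lambda(t)$. Two small imprecisions, neither fatal: (i) for complex parameters the adjoint of $\Gamma_+(t)$ with respect to the sesquilinear inner product is $\Gamma_-(\bar t)$ rather than $\Gamma_-(t)$, so the chain $\langle\Gamma_+(t)v_\lambda,v_\emptyset\rangle=\langle v_\lambda,\Gamma_-(t)v_\emptyset\rangle$ is not literally the adjoint relation; it is cleaner to read off the coefficient of $v_\emptyset$ in the skew expansion $\Gamma_+(t)v_\lambda=\sum_{\mu\subset\lambda}s_{\lambda/\mu}(t)v_\mu$, which gives $s_\lambda(t)$ directly (or to argue by continuation from real parameters, as your parenthetical hints). (ii) The absolute convergence $\sum_\lambda|s_\lambda(t)s_\lambda(t')|<\infty$ does not follow from $\sum_n n|t_n||t_n'|<\infty$ by naively replacing the parameters with their absolute values, because the power-sum expansion of $s_\lambda$ has coefficients of both signs; the standard route is Cauchy--Schwarz together with $\sum_\lambda|s_\lambda(t)|^2=Z_{t,\bar t}=\exp\left(\sum_n n|t_n|^2\right)$ under the usual square-summability hypotheses. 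You are also right that $Z_{t,t}^{-1}$ in the statement should read $Z_{t,t'}^{-1}$.
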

We also recall the following, that is the stepping stone in the proof of our main result. Its proof lies on the Boson-Fermion correspondence.
\begin{prop} \label{prop:schurtau} Let $A$ be an operator on $\Lambda^{ \frac{\infty}{2}}$ such that $A \otimes A$ commutes with the operator
\begin{align} \label{eq:defPsi}
\Psi := \sum_{ k \in \Z'} \psi_k \otimes \psi_k^*.
\end{align}
Set
\begin{align*}
\tilde{A}:= \Gamma_+(t) A \Gamma_-(t').
\end{align*}
Then, the sequence formed by the functions
\begin{align*}
 \tilde{\tau}_n(t,t'; A):=\langle \tilde{A} v_n, v_n \rangle
\end{align*}
are $\tau$-functions for the Toda lattice hierarchy, in the sense that they satisfy the bilinear Hirota equations
\begin{multline} \label{eq:hirotagene}
[z^{l-m}]\gamma(z^{-1},-2s')\tilde{\tau}_{m+1}(t+s,t'+s'+\{z\};A) \tilde{\tau}_l(t-s,t'-s'-\{z\} ; A ) \\
=[z^{m-l}] \gamma(z^{-1},2s) \tilde{\tau}_m (t+s -\{z\} , t +s' ; A) \tilde{\tau}_{l+1}(t-s+ \{z \}, t-s'; A )
\end{multline}
\end{prop}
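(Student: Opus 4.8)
The plan is to turn the commutation hypothesis into a fermionic bilinear identity and then translate it, via the Boson--Fermion correspondence, into the bosonic Hirota form \eqref{eq:hirotagene}. Unfolding $(A\otimes A)\Psi=\Psi(A\otimes A)$ gives the operator identity
\[
\sum_{k\in\Z'} A\psi_k\otimes A\psi_k^* \;=\; \sum_{k\in\Z'}\psi_k A\otimes\psi_k^* A ,
\]
which, writing $\sum_{k\in\Z'}\psi_k\otimes\psi_k^*=\oint\frac{dz}{2\pi i z}\,\psi(z)\otimes\psi^*(z)$ (the operator $\Psi$ is the $z^0$--coefficient of $\psi(z)\otimes\psi^*(z)$), becomes an identity between contour integrals of $A\psi(z)\otimes A\psi^*(z)$ and $\psi(z)A\otimes\psi^*(z)A$. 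I would then multiply this identity on the left by $\Gamma_+(t+s)\otimes\Gamma_+(t-s)$ and on the right by $\Gamma_-(t'+s')\otimes\Gamma_-(t'-s')$, and pair the result against the fixed ket $v_m\otimes v_{l+1}$ and bra $v_{m+1}\otimes v_l$; since it is an operator identity the two scalars are equal, and the goal is to identify them with the two sides of \eqref{eq:hirotagene}.

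For the left-hand side I push $\psi(z)$ and $\psi^*(z)$ rightwards through $\Gamma_-(t'\pm s')$ using \eqref{eq:commutevertex}, which produces the factors $\gamma(z^{-1},t'+s')^{-1}$ and $\gamma(z^{-1},t'-s')$ and reassembles $\tilde A_1=\Gamma_+(t+s)A\Gamma_-(t'+s')$ and $\tilde A_2=\Gamma_+(t-s)A\Gamma_-(t'-s')$. Applying \eqref{eq:bosonfermioncorres} together with $\Gamma_+(\cdot)v_n=v_n$, $Rv_n=v_{n+1}$, the commutation of $z^C$ and $R$ with $\Gamma_\pm$, and $\Gamma_\pm(a)\Gamma_\pm(b)=\Gamma_\pm(a+b)$, I obtain the clean formulas $\psi(z)v_m=z^{m+1}\Gamma_-(\{z\})v_{m+1}$ and $\psi^*(z)v_{l+1}=z^{-(l+1)}\Gamma_-(-\{z\})v_l$. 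The factors $\Gamma_-(\pm\{z\})$ merge into the $t'$--slot, giving $\tilde\tau_{m+1}(t+s,t'+s'+\{z\})$ and $\tilde\tau_l(t-s,t'-s'-\{z\})$; the two $\gamma$--factors combine into $\gamma(z^{-1},-2s')$ and the monomials into $z^{m-l}$, so that $\oint\frac{dz}{2\pi i z}$ produces exactly the left side $[z^{l-m}]\gamma(z^{-1},-2s')\,\tilde\tau_{m+1}\,\tilde\tau_l$ of \eqref{eq:hirotagene}.

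The right-hand side is treated symmetrically but with a twist: here $\psi(z),\psi^*(z)$ stand to the right of $\Gamma_+(t\pm s)$, so I apply \eqref{eq:bosonfermioncorres} to the products $\Gamma_+(t\pm s)\psi(z)$ and $\Gamma_+(t\pm s)\psi^*(z)$, commuting $\Gamma_+$ past the $\Gamma_-(\pm\{z\})$--pieces (which yields the constants $Z_{t\pm s,\pm\{z\}}$) so that the remaining $\Gamma_+(\{z^{-1}\})$--pieces merge into the $t$--slot, producing $\tilde\tau_m(t+s-\{z^{-1}\},t'+s')$ and $\tilde\tau_{l+1}(t-s+\{z^{-1}\},t'-s')$. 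The $Z$--factors combine into $\gamma(z,2s)$, and a final relabeling $z\mapsto z^{-1}$ (harmless for the $z^0$--extraction) turns $\{z^{-1}\}$ into $\{z\}$ and $\gamma(z,2s)$ into $\gamma(z^{-1},2s)$, yielding the right side $[z^{m-l}]\gamma(z^{-1},2s)\,\tilde\tau_m\,\tilde\tau_{l+1}$ of \eqref{eq:hirotagene}.

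The heart of the argument, and its main obstacle, is precisely this asymmetric bookkeeping: verifying that the two sides of the single fermionic identity emerge in the two different forms of \eqref{eq:hirotagene}, one dressing the $t'$--times and the other the $t$--times, with the discrete shift $n\mapsto n\pm1$ dictated by the charge that $\psi$ raises and $\psi^*$ lowers, and that every prefactor $\gamma(\cdot,\cdot)$, $Z_{\cdot,\cdot}$ and power $z^{\cdot}$ assembles into exactly $\gamma(z^{-1},\mp2s')$, $\gamma(z^{-1},2s)$ and $[z^{\pm(l-m)}]$, including the $z\mapsto z^{-1}$ symmetry responsible for the difference between the two sides. To keep the pairings free of conjugation issues I would use only that $v_n$ is killed by $\alpha_k$ for $k>0$ (equivalently $\Gamma_+(\cdot)v_n=v_n$ and its dual), that $R^\dagger=R^{-1}$, and that $z^C$ acts as a scalar on each charge sector; convergence of the $Z$--factors and of the contour integral is ensured by the hypotheses $\sum_n n|t_n||t_n'|<\infty$, $\sum_n n|s_n||s_n'|<\infty$ and the analyticity of $J$ near the unit circle.
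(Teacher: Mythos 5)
Your proposal is correct and follows essentially the same route as the paper: the commutation hypothesis is converted into the $[z^0]$-coefficient identity for $\psi(z)\otimes\psi^*(z)$, and both sides are evaluated via the Boson--Fermion correspondence \eqref{eq:bosonfermioncorres}, the vertex-operator commutation relations \eqref{eq:commutevertex}, and the action on charge vectors. The only (immaterial) difference is bookkeeping: the paper dresses $A$ with $\Gamma_+(t)\,\cdot\,\Gamma_-(t')$ and absorbs the shifts $s,s'$ into the test vectors $\Gamma_-(\pm s')v_m$, $\Gamma_-(\pm s)v_{m+1}$, whereas you place $t\pm s$, $t'\pm s'$ directly in the dressing operators and pair against bare $v_m$, $v_{l+1}$ --- the two are related by moving the adjoint $\Gamma_+(s)$ across the pairing, and all prefactors ($\gamma(z^{-1},-2s')$, $\gamma(z^{-1},2s)$ after $z\mapsto z^{-1}$, and the powers $z^{\pm(m-l)}$) come out identically.
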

Our goal in the next section will be to construct an operator $A=A_\sigma$ for which the corresponding tau-function given by the preceding proposition will be the function $\tau_n(t,t';\sigma)$ (\ref{def:tau-sigma}), or, equivalently by Prop. \ref{prop:prop2}, multiplicative statistics of Schur measures.
\begin{proof}[Proof of Prop. \ref{prop:schurtau}]
We sketch a proof for completeness. Since the operator $A \otimes A$ commutes with $\Psi$, so does the operator $\tilde{A}$. This relation translates into
\begin{align*}
[ z^0 ] \left( \tilde{A} \otimes \tilde{A} \right) \left( \psi(z) \otimes \psi^*(z) \right) = [z^0] \left( \psi(z) \otimes \psi^*(z) \right) \left( \tilde{A} \otimes \tilde{A} \right),
\end{align*}
i.e.
\begin{align} \label{eq:commute}
[z^0] \Gamma_+(t)A\Gamma_-(t') \psi(z) \otimes \Gamma_+(t)A\Gamma_-(t')  \psi^*(z) = [z^0] \psi(z) \Gamma_+(t)A\Gamma_-(t')  \otimes \psi^*(z) \Gamma_+(t)A\Gamma_-(t').
\end{align}
Relation (\ref{eq:hirotagene}) now follows by taking the matrix coefficient of eq. (\ref{eq:commute}) for the vector \[\Gamma_-(s') v_m \otimes \Gamma_-(-s') v_{l+1}\]
against
\[ \Gamma_-(s) v_{m+1} \otimes \Gamma_-(-s) v_{l}. \]
Indeed, from the Boson-Fermion correspondence \eqref{eq:bosonfermioncorres}, we have for $m \in \Z$ and a sequence $s'$
\begin{align*}
\Gamma_+(t) A \Gamma_-(t')\psi(z) \Gamma_-(s') v_m &= \Gamma_+(t) A \Gamma_-(t')z ^{C} R \Gamma_-( \{ z \} ) \Gamma_+( - \{ z^{-1} \} ) \Gamma_-( s') v_m \\
&= z^{m+1} \gamma ( z^{-1} , -s') \Gamma_+(t) A \Gamma_-(t'  + \{z\} + s' )v_{m+1}.
\end{align*}
The second equality above is obtained as follows: from (\ref{eq:commutevertex}), we have
\[ \Gamma_+( - \{ z^{-1} \} ) \Gamma_-( s') = \gamma ( z^{-1} , -s') \Gamma_-( s') \Gamma_+( - \{ z^{-1} \} ). \]
Then, we use (\ref{eq:vertexcharge}), the fact that $R$ commutes with the vertex operators and changes $v_m$ to $v_{m+1}$, and finally that the vertex operators preserve the charge, producing the factor $z^{m+1}$. 

Similarly, we obtain for $l \in \Z$:
\begin{align*}
\Gamma_+(t)A\Gamma_-(t')  \psi^*(z) \Gamma_-(-s') v_{l+1} =  z^{-l-1} \gamma ( z^{-1} , -s') \Gamma_+(t)A \Gamma_-(t' -s' - \{ z \}) v_l.
\end{align*}
We deduce that
\begin{multline*}
[z^0]\left( \tilde{A} \psi(z) \otimes \tilde{A} \psi^*(z) \right) \left( \Gamma_-(s') v_m \otimes \Gamma_-(-s') v_{l+1} \right) \\
= [z^{l-m}] \gamma( z^{-1} , -2s' ) \Gamma_+(t) A \Gamma_-(t'  + \{z\} + s' )v_{m+1} \otimes \Gamma_+(t)A \Gamma_-(t' -s' - \{ z \}) v_l,
\end{multline*}
and thus, using that $\Gamma_+$ is the adjoint of $\Gamma_-$,
\begin{multline*}
[z^0] \left\langle \left( \tilde{A} \psi(z) \otimes \tilde{A} \psi^*(z) \right) \left( \Gamma_-(s') v_m \otimes \Gamma_-(-s') v_{l+1} \right) , \Gamma_-(s) v_{m+1} \otimes \Gamma_-(-s) v_{l} \right\rangle \\
=  [z^{l-m}]  \gamma(z^{-1},-2s')\tilde{\tau}_{m+1}(t+s,t'+s'+\{z\};A) \tilde{\tau}_l(t-s,t'-s'-\{z\} ; A ) .
\end{multline*}
Equality
\begin{multline}
[z^0] \left\langle \left( \psi(z) \tilde{A} \otimes \psi^*(z) \tilde{A}\right) \left( \Gamma_-(s') v_m \otimes \Gamma_-(-s') v_{l+1} \right) , \Gamma_-(s) v_{m+1} \otimes \Gamma_-(-s) v_{l} \right\rangle \\
= [z^{m-l}] \gamma(z^{-1},2s) \tilde{\tau}_m (t+s -\{z\} , t +s' ; A) \tilde{\tau}_{l+1}(t-s+ \{z \}, t-s'; A )
\end{multline}
is proved in a similar way, using that $z^C$ is self-adjoint and that $R$ is unitary.
\end{proof}

\section{Conclusion of the proof of Theorem \ref{thm:thm1}} \label{section:proof}

We use Proposition \ref{prop:prop2} to reduce the case to multiplicative functionals of unmodified Schur measures. We prove in Lemmas \ref{lem:commutation} and \ref{lem:expansion} that multiplicative functionals of Schur measures fit into the setting of Proposition \ref{prop:schurtau}.\\

Consider the operator
\begin{align*}
A_\sigma := \prod_{ k \in \Z'} (I-\sigma(k) \psi_k \psi_k^*). 
\end{align*}
The operator $A_\sigma$ is well defined and acts diagonally on the basis vector $v_S$ by
\begin{align} \label{eq:Asigma1}
    A_\sigma v_S = \prod_{x \in S} (1- \sigma(x)) v_S,
\end{align}
and we have the expansion
\begin{align} \label{eq:Asigmaexpansion}
    A_\sigma = \sum_{m \geq 0} (-1)^m \sum_{\{k_1, \cdots , k_m\} \subset \Z'} \sigma(k_1) \cdots \sigma(k_m)\psi_{k_1}\psi_{k_1}^* \cdots \psi_{k_m} \psi_{k_m}^*.
\end{align}
We start by the following Lemma, showing that the operator $A_\sigma$ satisfies the commutation condition of Proposition \ref{prop:schurtau}.
\begin{lem} \label{lem:commutation}
The operator $A_\sigma \otimes A_\sigma$ commutes with the operator $\Psi$, defined in (\ref{eq:defPsi}).
\end{lem}
\begin{proof}
Let $S,S' \subset \Lambda(\Z')$ be two semi-infinite subsets of $\Z'$. It suffices to prove that, for any $k_0 \in \Z$, we have
\begin{align*}
    (A_\sigma\otimes A_\sigma \cdot\psi_{k_0} \otimes \psi_{k_0}^*)(v_S \otimes v_{S'})=(\psi_{k_0}\otimes \psi_{k_0}^*\cdot A_\sigma\otimes A_\sigma) (v_S \otimes v_{S'} ).
\end{align*}
If $k_0 \in S$ or $k_0 \notin S'$, since $A_\sigma$ acts diagonally on the basis formed by the $v_S$, both sides of this equality are zero. Assume now that $k_0 \notin S$ and $k_0 \in S'$. From \eqref{eq:Asigma1} and \eqref{eq:defpsi}, \eqref{eq:defpsi*}, we see that
\begin{align*}
    A_\sigma\psi_{k_0}v_S=(1-\sigma(k_0))\psi_{k_0} A_\sigma v_S,
\end{align*}
while
\begin{align*}
    \psi_{k_0}^* A_\sigma v_{S'}=(1-\sigma(k_0))A_\sigma \psi_{k_0}^* v_{S'}.
\end{align*}
We deduce that
\begin{align*}
  (A_\sigma\otimes A_\sigma \cdot\psi_{k_0} \otimes \psi_{k_0}^*)(v_S \otimes v_{S'})&=(A_\sigma \psi_{k_0} v_S) \otimes (A_\sigma \psi_{k_0}^* v_{S'}) \\
  &=(1-\sigma(k_0)) (\psi_{k_0} A_\sigma v_S \otimes A_\sigma\psi_{k_0}^*v_{S'}) \\
  &=(\psi_{k_0} A_\sigma v_S )\otimes (\psi_{k_0}^*A_{\sigma} v_{S'}) \\
  &=(\psi_{k_0}\otimes \psi_{k_0}^*\cdot A_\sigma\otimes A_\sigma) (v_S \otimes v_{S'} ),
\end{align*}
and the lemma is proved.
\end{proof}
We can now establish that multiplicative functionals of Schur measures are the functions $\tilde{\tau}_n(t,t';A)$ of Proposition \ref{prop:schurtau} corresponding to $A=A_\sigma$.
\begin{lem} \label{lem:expansion}
    We have
\begin{align*}
\langle \Gamma_+( t) A_\sigma \Gamma_-(t') v_n ,v_n \rangle = \tau_n(t,t';\sigma).
\end{align*}
\end{lem}
\begin{proof}
Since the shift operator $R$ commutes withe vertex operators, we have from the expansion \eqref{eq:Asigmaexpansion} that
\begin{multline} \label{eq:fredholmsigma}
\langle \Gamma_+( t) A_\sigma \Gamma_-(t') v_n ,v_n \rangle = \langle \Gamma_+(t) \prod_{ k \in \Z'} (1-\sigma(k-n)) \psi_{k-n} \psi_{k-n}^* \Gamma_-(t') v_\emptyset, v_{\emptyset} \rangle \\
= \sum_{m \geq 0} (-1)^m \sum_{ \{x_1, \dots,x_m \} \subset \Z'} \sigma(x_1-n) \cdots  \sigma(x_m-n ) \langle \Gamma_+(t)  \psi_{x_1} \psi_{x_1}^* \cdots \psi_{x_m} \psi_{x_m}^* \Gamma_-(t') v_\emptyset , v_\emptyset\rangle.
\end{multline}
From Proposition \ref{prop:schurpsi} and Theorem \ref{thm:okounkov}, we have
\begin{align*}
\langle \Gamma_+(t)  \psi_{x_1} \psi_{x_1}^* \cdots \psi_{x_m} \psi_{x_m}^* \Gamma_-(t') v_\emptyset , v_\emptyset\rangle = Z_{t,t'} \det \left( K_{t,t'}  (x_i,x_j) \right)_{i,j=1}^m.
\end{align*}
Thus, one recognizes on the last line of equation (\ref{eq:fredholmsigma}) the Fredholm determinant expansion of
\begin{align*}
Z_{t,t'} \det \left( 1- \sigma( \cdot - n ) K_{t,t'}  \right)_{\ell^2(\Z')} =Z_{t,t'} \E_{\P_{t,t'}} \left[ \prod_{x \in X} (1- \sigma(x-n) )\right].
\end{align*}
We conclude by Proposition \ref{prop:prop2}.
\end{proof}
We now conclude the proof of Theorem \ref{thm:thm1} by applying Proposition \ref{prop:schurtau} to $A= A_\sigma$.

\end{document}